\newtheorem{lemma}{Lemma}
\newtheorem{theorem}{Theorem}
\begin{document}

\title{The Massey's method for sport rating: \\ a network science perspective}

\author{Enrico Bozzo \\
Department Mathematics, Computer Science, and Physics \\ 
University of Udine \\
\url{enrico.bozzo@uniud.it} \and 
Massimo Franceschet\\
Department Mathematics, Computer Science, and Physics \\ 
University of Udine \\
\url{massimo.franceschet@uniud.it}}

\maketitle

\begin{abstract}
We revisit the Massey's method for rating and ranking in sports and contextualize it as a general centrality measure in network science.
\end{abstract}

\section{Introduction} \label{sec:introduction}

Rating and ranking in sport have a flourishing tradition. Each sport competition has its own official rating, from which a ranking of players and teams can be compiled.  The challenge of many sports' fans and bettors is to beat the official rating method: to develop an alternative rating algorithm that is better than the official one in the task of predicting future results. As a consequence, many sport rating methods have been developed. Amy N. Langville and Carl D. Meyer even wrote a (compelling) book about (general) rating and ranking methods entitled \textit{Who's \#1?} \cite{LM12}.

In 1997, Kenneth Massey, then an undergraduate, created a method for ranking college football teams. He wrote about this method, which uses the mathematical theory of least squares, as his honors thesis \cite{M97}. Informally, at any given day $k$ of the season, Massey's method rates a team $i$ according to the following two factors: (a) the difference between points for and points against $i$, or point spread of $i$, up to day $k$, and (b) the ratings of the teams that $i$ matched up to day $k$. Hence, highly rated teams have a large point differential and matched strong teams so far. Below in the ranking are teams that did well but had an easy schedule as well as teams that did not so well but had a tough schedule.

In this paper, we review the original Massey's method and explicitly provide the algebraic background of the method. Moreover, we interpret Massey's technique in the context of network science.
The paper is organized as follows. Section \ref{sec:massey} reviews the original Massey's method. In particular, Section \ref{sec:example1} provides a full example of the method, Section \ref{sec:netsci} embeds the method in the context of network science, and Section \ref{sec:algebra} gives the algebraic background of the Massey's method for the curious reader. We review related methods for sport rating in Section \ref{sec:related}. Finally, we discuss alternative (outside the sport context) uses of Massey's methods in Section \ref{sec:discussion}.

\section{The Massey's method for sports ranking} \label{sec:massey}
The main idea of Massey's method, as proposed in \cite{M97}, is enclosed in the following equation:

$$r_i - r_j = y_k$$
where $r_i$ and $r_j$ are the ratings of teams $i$ and $j$ and $y_k$ is the absolute margin of victory for game $k$ between teams $i$ and $j$. If there are $n$ teams who played $m$ games, we have a linear system:

\begin{equation} \label{Massey1}
X r = y
\end{equation}
where $X$ is a $m \times n$ matrix such the k-th row of $X$ contains all 0s with the exception of a 1 in location $i$ and a $-1$ in location $j$, meaning that team $i$ beat team $j$ in match $k$ (if match $k$ ends with a draw, either $i$ or $j$ location can be assigned $1$, and the other $-1$). Observe that, if $e$ denotes the vector of all $1$'s, then $Xe=0$.
Let $M = X^T X$ and $p = X^T y$. Notice that

\begin{equation*}
M_{i,j} = \left\{
    \begin{array}{ll}
     \text{the negation of the \# of matches between } i \text{ and } j & \text{if } i \neq j,\\
      \text{\# of games played by } i  & \text{if } i = j.
    \end{array} \right.
\end{equation*}
and $p_i$ is the signed sum of point spreads of every game played by $i$. Clearly the entries of $p$ sum to $0$, in fact $e^Tp=e^TX^Ty=(Xe)^Ty=0$. The Massey's method is then defined by the following linear system:

\begin{equation} \label{Massey2}
M r = p
\end{equation}
which corresponds to the least squares solution of system (\ref{Massey1}).

We observe how the Massey's team ratings are in fact interdependent. Indeed, Massey's matrix $M$ can be decomposed as

$$M = D - A,$$
where $D$ is a diagonal matrix with $D_{i,i}$ equal to the number of games played by team $i$, and $A$ is a matrix with $A_{i,j}$ equal to the number of matches played by team $i$ against team $j$. Hence, linear system (\ref{Massey2}) is equivalent to

\begin{equation} \label{Massey3}
Dr - Ar = p,
\end{equation}
or, equivalently

\begin{equation} \label{Massey4}
r = D^{-1} (A r + p) = D^{-1} A r + D^{-1} p.
\end{equation}
That is, for any team $i$

$$r_i = \frac{1}{D_{i,i}} \sum_j A_{i,j} r_j + \frac{p_i}{D_{i,i}}.$$
This means that the rating $r_i$ of team $i$ is the sum $r^{(1)}_{i} + r^{(2)}_{i}$ of two meaningful components:

\begin{enumerate}
\item the mean rating of teams that $i$ has matched $$r^{(1)}_{i} = \frac{1}{D_{i,i}} \sum_j A_{i,j} r_j;$$
\item the mean point spread of team $i$ $$r^{(2)}_{i} = \frac{p_i}{D_{i,i}}.$$

\end{enumerate}

How do we solve the Massey's system $M r = p$? Unfortunately, $M = D - A$ is a singular matrix, hence it is not possible to solve this system  computing the inverse of $M$. Notice that the symmetric matrix $A$ can be interpreted as the adjacency matrix of an undirected weighted graph that we denote with $G_A$.  It is well known that $A$ is irreducible if and only if $G_A$ is connected.
Assuming that matrix $A$ is irreducible, or equivalently that the graph $G_A$ is connected, a solution of the system can be obtained as follows: let $\hat{M}$ be the matrix obtained by replacing any row, say the last, of $M$ with a row of all $1$'s, and let $\hat{p}$ be the vector obtained by replacing the last element of $p$ with a 0. Then, solve the system:

\begin{equation} \label{Massey2.1}
\hat{M} r = \hat{p}
\end{equation}
Notice that the added constraint forces the ranking vector $r$ to sum to 0. See Section \ref{sec:algebra} for the mathematical details.

What about the connectivity hypothesis of the graph of matches $G_A$? Without such assumption, the ranking cannot be computed. Indeed, this is not a strong constraint. Assume, realistically, a competition in which there are $n$ teams matching. At each season day each team matches another team not matched before. Hence, for all teams to be matched at any day, $n$ must be even and we have $n/2$ matches each day for a maximum of $n-1$ days before teams match twice. This is known as \textit{round-robin competition}. At season day $k$ the graph $G_A$ is obtained as the disjoint union of $k$ perfect matchings, hence, in particular, it is regular with degree $k$. Moreover, we have the following result:

\begin{lemma} \label{lemma:minmax1}
Given $n > 2$ teams, with $n$ even, competing in a round-robin competition, then:

\begin{enumerate}
\item the minimum number of days for the graph of matches to be connected is 2;
\item the maximum number of days for the graph of matches to be connected is $n/2$.
\end{enumerate}
\end{lemma}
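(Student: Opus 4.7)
The plan is to treat the two claims separately, each by a combination of a counting argument and an explicit construction.

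For claim~1, at day~1 the graph $G_A$ is a single perfect matching on $n>2$ vertices and hence splits into $n/2\ge 2$ single-edge components, so one day is never sufficient. To show that two days can suffice, I would exhibit a schedule explicitly: decompose the Hamiltonian cycle on $1,2,\ldots,n$ into its two alternating perfect matchings $M_1=\{(1,2),(3,4),\ldots,(n-1,n)\}$ and $M_2=\{(2,3),(4,5),\ldots,(n,1)\}$, which are edge-disjoint and whose union is that very Hamiltonian cycle, hence connected.

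For claim~2, the key is the following structural observation. If $G_A$ is disconnected at day $k$, there is a nontrivial bipartition $V=V_1\sqcup V_2$ with no edges across. Because every day's matching is a perfect matching of $V$ avoiding cross-edges, it must restrict to a perfect matching inside each $V_i$, so $|V_1|$ and $|V_2|$ are both even. Any team in $V_1$ has by day $k$ played $k$ distinct opponents, all lying in $V_1\setminus\{\text{itself}\}$, whence $k\le |V_1|-1$ and, symmetrically, $k\le |V_2|-1$. Since at least one side has size at most $n/2$, this forces $k\le n/2-1$, so the graph must be connected once $k\ge n/2$.

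For the matching construction showing that $n/2$ is attained, I would split the $n$ teams into two halves $V_1,V_2$ of size $n/2$ and run two independent internal round-robins in parallel, day by day pairing one perfect matching inside each half. As long as neither internal round-robin has terminated, $G_A$ is the disjoint union of the two internal graphs and is therefore disconnected; the earliest day on which cross-edges become unavoidable, and hence on which the graph can first turn connected, is day $n/2$. The main subtlety is that a perfect matching within a group of $n/2$ teams requires $n/2$ itself to be even, i.e.\ $4\mid n$; in the borderline case $n\equiv 2\pmod 4$ this half-and-half construction is not directly realisable as a matching schedule, and a separate adaptation (using an unequal split that still saturates the $n/2-1$ bound) is the point where care is needed.
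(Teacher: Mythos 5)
Your treatment of item 1 and of the upper bound in item 2 is sound. The Hamiltonian-cycle decomposition into two alternating perfect matchings is exactly the paper's day-2 construction (``after two days the graph can become a cycle''), and your parity-plus-degree count for the upper bound is a cleaner, more quantitative version of the paper's argument, which instead characterizes the only possible disconnected configuration at day $n/2-1$ as two disjoint copies of $K_{n/2}$ and observes that on day $n/2$ every match must then be a cross-match.

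The genuine gap is in the achievability half of item 2, and it is not a gap that ``care'' can close. You correctly observe that the half-and-half construction requires $4\mid n$, and you defer the case $n\equiv 2\pmod 4$ to ``an unequal split that still saturates the $n/2-1$ bound''. But your own counting argument forbids any such split: if the graph is disconnected at day $k$, every component has even size (each daily matching restricts to a perfect matching of each component) and has at least $k+1$ vertices, so for $k=n/2-1$ there must be exactly two components of size exactly $n/2$, which must therefore be even. Hence for $n\equiv 2\pmod 4$ no schedule at all is still disconnected at day $n/2-1$; for instance with $n=6$ the union of any two edge-disjoint perfect matchings is a single $6$-cycle, so the graph is already connected at day $2$. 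The deferred adaptation does not exist: what you can actually prove is that the maximum equals $n/2$ when $4\mid n$ and is at most $n/2-1$ when $n\equiv 2\pmod 4$. (The paper's own proof shares this blind spot, since realizing two disjoint copies of $K_{n/2}$ as $n/2-1$ rounds requires a one-factorization of $K_{n/2}$ and hence $n/2$ even.) If the lemma is read purely as an upper bound, your counting argument alone suffices and the construction can be dropped; if it is read as a tightness claim, the statement needs the extra hypothesis $4\mid n$.
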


\begin{proof}

We prove item 1. After one season day the graph of matches is not connected being just a perfect matching of the teams. After two day the graph can become a cycle and hence can be connected.

We now prove item 2.
First of all, let us observe that after $n/2-1$ season days the graph can be not connected, but the only possibility is that it is formed by two complete graphs with nodes in $V_1$ and $V_2$ having each $n/2$ nodes. At day $n/2$, all matches are between pairs of teams one of which is in $V_1$ and the other in $V_2$, resulting in a connected graph.
\end{proof}

Typically, the actual number of days for the match graph to be connected is close to the minimum. For instance, we experimented that in the last 11 editions of the Italian soccer league (Serie A), after 2 or 3 days the match graph is connected, with a mean of 2.6.

Massey also proposed an offensive rating ($o$) and a defensive rating ($d$) characterizing the offensive and defensive strengths of teams. Massey assumes that $r = o + d$, that is, the overall strength of a team is the sum of its offensive and defensive powers. Let's decompose the point spread vector $p = f - a$, where $f$ holds the total number of points scored by each team and $a$ holds the total number of points scored against each team. Then, the equations defining $o$ and $d$ are:

$$
\begin{cases}
Do - Ad =  f \\
Ao - Dd =  a.
\end{cases}
$$
That is for each team $i$:

$$
\begin{cases}
o_i =  \frac{1}{D_{i,i}} (\sum_j A_{i,j} d_j + f_i) \\
d_i =  \frac{1}{D_{i,i}} (\sum_j A_{i,j} o_j - a_i) 
\end{cases}
$$
This means that the offensive rating of team $i$ multiplied by the number of games played by $i$ is equal to the defensive ratings of opponents of $i$ plus the number of points scored by $i$. On the other hand, the defensive rating of team $i$ multiplied by the number of games played by $i$ is equal to the offensive ratings of opponents of $i$ minus the number of points scored against $i$.

Since we know that $r = o + d$, we have that, knowing $r$, the defensive rating $d$ can be computed as:

\begin{equation} \label{MasseyOD}
(D + A) d = D r - f
\end{equation}
and, knowing $d$ and $r$, the offensive rating $o = r - d$.

How do we solve system (\ref{MasseyOD})? In fact, it might happen that matrix $D + A$ is singular, hence it cannot be inverted. Assuming that the graph $G_A$ is connected, it holds that $D+A$ is singular precisely when $G_A$ is bipartite.
Let $N = D + A$ and $q = Dr - f$. If $G_A$ is connected and bipartite with node sets $U$ and $V$, we can apply to the system $N d = q$ a perturbation trick similar to the one exploited for the Massey's system $M r = p$. Let $\hat{N}$ be the matrix obtained by replacing any row, say the last, of $N$ with vector $v$ defined as $v_i = 1$ if $i \in U$ and $v_i = -1$ if $i \in V$ and let $\hat{q}$ be the vector obtained by replacing the last element of $q$ with a 0. Then, a solution of system (\ref{MasseyOD}) is obtained solving the following perturbed system:

\begin{equation} \label{MasseyOD.1}
\hat{N} d = \hat{q}
\end{equation}
Again, see Section \ref{sec:algebra} for the mathematical details.
Assuming a round-robin competition, the hypothesis that the match graph $G_A$ is non-bipartite is not a strong one, as proved in the following result.

\begin{lemma} \label{lemma:minmax2}
Given $n > 2$ teams, with $n$ even, competing in a round-robin competition, then:

\begin{enumerate}
\item the minimum number of days for the graph of matches to be non-bipartite is 3;
\item the maximum number of days for the graph of matches to be non-bipartite is $n/2 + 1$.
\end{enumerate}
\end{lemma}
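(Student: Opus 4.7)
The plan is to treat the two items separately, exploiting in both the fact that after $k$ days the match graph $G_A$ is the edge-disjoint union of $k$ perfect matchings, hence $k$-regular.

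For item~1, I would first show bipartiteness up to day two. After one day $G_A$ is a perfect matching, trivially bipartite. After two days it is $2$-regular and hence a disjoint union of cycles; since at every vertex the two incident edges come one from each matching, each cycle alternates between $M_1$ and $M_2$ and thus has even length, which makes $G_A$ bipartite. To show three days suffice to break bipartiteness, I would exhibit a concrete schedule: pick three teams $a,b,c$, place the triangle edges $ab$, $bc$, $ca$ into three distinct matchings $M_1, M_2, M_3$, and extend each $M_i$ to a perfect matching over the remaining $n-2$ (even) teams so that the three matchings remain edge-disjoint. The union then contains the triangle $abc$, an odd cycle, so $G_A$ is not bipartite.

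For item~2, the lower bound is the easier half. Partition the teams into two halves $U,V$ of size $n/2$, $1$-factorize the complete bipartite graph $K_{n/2,n/2}$ into $n/2$ edge-disjoint perfect matchings (a standard fact for regular bipartite graphs), and use these as the first $n/2$ days; the resulting union is $K_{n/2,n/2}$, which is bipartite. For the upper bound I would count degrees: after $n/2+1$ days $G_A$ is $(n/2+1)$-regular, so were it bipartite with sides $U,V$, every vertex in $U$ would need $n/2+1$ neighbors in $V$, forcing $|V|\ge n/2+1$ and symmetrically $|U|\ge n/2+1$, contradicting $|U|+|V|=n$.

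The only step that is not essentially a one-liner is the edge-disjoint extension in item~1; I would handle it by rotating the ``extra'' teams $\{d,e,\ldots\}$ in standard round-robin fashion, pairing one of $a,b,c$ with a fresh outside team on each of the three days and pairing up the rest. This is pure bookkeeping and does not disturb the triangle $abc$ that witnesses non-bipartiteness.
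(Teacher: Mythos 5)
Your proof is correct, and it diverges from the paper's in two places worth noting. For the lower bound of item~1 the paper only remarks that after two days the graph has no $3$-cycles, which by itself does not imply bipartiteness; your observation that the union of two perfect matchings is a disjoint union of cycles alternating between the two matchings, hence of even length, actually closes that gap and is the cleaner argument. For the upper bound of item~2 the paper argues structurally: if the graph is still bipartite after $n/2$ days then $n/2$-regularity forces it to be exactly $K_{n/2,n/2}$, and the matches of day $n/2+1$ must then fall inside the parts and create triangles. You instead run a direct degree count at day $n/2+1$: an $(n/2+1)$-regular bipartite graph needs both parts of size at least $n/2+1$, exceeding $n$ vertices. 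Your route is shorter and self-contained; the paper's route buys the extra information that the unique extremal schedule is the complete bipartite one, which is also what both of you use (via a $1$-factorization of $K_{n/2,n/2}$) to show the bound is attained. The remaining step you flag --- extending the triangle $ab$, $bc$, $ca$ to three edge-disjoint perfect matchings --- is indeed routine (for $n=4$ the three days $\{ab,cd\}$, $\{bc,ad\}$, $\{ca,bd\}$ already work, and the rotation scheme handles larger even $n$), and being explicit about it is more careful than the paper's bare assertion that a triangle ``is possible'' after three days.
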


\begin{proof}
We prove item 1. After two days the graph of the matches cannot have cycles of length 3. After $3$ days it is possible to have cycles of length $3$ so that the graph can be non-bipartite.

We now prove item 2. After $n/2$ days the graph of the matches can be bipartite but the only possibility is that it is a complete bipartite graph with node sets $V_1$ and $V_2$ such that $|V_1|=|V_2|=n/2$. At day $n/2 + 1$ all matches are between pairs of nodes either in $V_1$ or in $V_2$. Hence, some cycles of length 3 are formed and therefore the graph is no more bipartite.

\end{proof}

Typically, the actual number of days for the match graph to be non-bipartite is close to the minimum. For instance, we experimented that in the last 11 editions of the Italian soccer league (Serie A), within 5 days the match graph is non-bipartite, with a mean of 3.7. Hence, we expect that the Massey's method is fully applicable after few season days in the competition.

\subsection{Example of Massey's method} \label{sec:example1}

The table below shows the results of 4 matches (numbered 1, 2, 3, 4) involving 4 fictitious teams (labelled A, B, C, D):

\bigskip
\begin{center}
\begin{tabular}{ccccc}
match & team 1 & team 2 & score 1 & score 2 \\ \hline
1 & A & C & 2 & 0 \\
2 & A & D & 3 & 0 \\
3 & B & C & 1 & 1 \\
4 & B & D & 2 & 1 \\
\end{tabular}
\end{center}
The match-team matrix $X$ is given below:

\bigskip
\begin{center}
\begin{tabular}{l|rrrr}
   & A & B & C & D \\ \hline
1 & 1 & 0 & -1 & 0 \\
2 & 1 & 0 & 0 &  -1 \\
3 & 0 & 1 & -1 & 0 \\
4 & 0 & 1 & 0 &  -1 \\
\end{tabular}
\end{center}
The match spread vector $y$ is:

\bigskip
\begin{center}
\begin{tabular}{c|r}
team & y \\ \hline
1 & 2 \\
2 & 3 \\
3 & 0 \\
4 & 1 \\
\end{tabular}
\end{center}
The Massey's matrix $M = X^T X$ is:

\bigskip
\begin{center}
\begin{tabular}{l|rrrr}
   & A & B & C & D \\ \hline
A & 2 & 0 & -1 & -1 \\
B & 0 & 2 & -1 & -1 \\
C & -1 & -1 & 2 & 0 \\
D & -1 & -1 & 0 & 2 \\
\end{tabular}
\end{center}
and the team spread vector $p = X^T y$ is:

\bigskip
\begin{center}
\begin{tabular}{c|r}
team & p \\ \hline
A & 5 \\
B & 1 \\
C & -2 \\
D & -4 \\
\end{tabular}
\end{center}
The resulting Massey's rating is the following:

\bigskip
\begin{center}
\begin{tabular}{c|r}
team & r \\ \hline
A & 1.75 \\
B & -0.25 \\
C & -0.25 \\
D & -1.25 \\
\end{tabular}
\end{center}
Notice that teams B and C have the same rating (and hence ranking), despite the point spread of B (1) is higher than the point spread of C (-2). This is because B against C ended in a draw,
B won the second match, but against the weakest team (D), while C lost the second match, but against the strongest team (A). Indeed, the rating $r$ can be decomposed in the partial ratings $r^{(1)}_{i}$  and $r^{(2)}_{i}$ that follows:

\bigskip
\begin{center}
\begin{tabular}{c|r|r|r}
team  & $r$ & $r^{(1)}_{i}$ & $r^{(2)}_{i}$  \\ \hline
A & 1.75 & -0.75 & 2.5 \\
B & -0.25 &  -0.75 & 0.5 \\
C & -0.25 & 0.75  & -1.0 \\
D & -1.25 & 0.75  & -2.0 \\
\end{tabular}
\end{center}
Hence, $B$ has a better point spread but an easier schedule with respect to $C$. Summing the two rating componentes we get the same rating for teams $B$ and $C$.

Moreover, the rating $r$ can be decomposed in the offensive and defensive ratings that follows:

\bigskip
\begin{center}
\begin{tabular}{c|r|r|r}
team  & $r$ & $o$ & $d$  \\ \hline
A & 1.75  & 1.875 & -0.125 \\
B & -0.25 & 0.875 & -1.125 \\
C & -0.25 & -0.125  & -0.125 \\
D & -1.25 & -0.125  & -1.125 \\
\end{tabular}
\end{center}
where the points for and against vectors are as follows:

\bigskip
\begin{center}
\begin{tabular}{c|r|r|r}
team  & $p$ & $f$ & $a$  \\ \hline
A &  5  & 5 & 0 \\
B &  1 &  3 & 2 \\
C & -2 &  1 & 3 \\
D & -4 &  1 & 5 \\
\end{tabular}
\end{center}
Notice that team $B$ has a better offensive rating than $C$, but a worse defensive rating, and the sum of offensive and defensive powers is the same for both teams. Also, A and C have the same defensive rating, although A has 0 points against and C has 3 points again; again, this because the schedule of C is harder than the schedule of A.

\subsection{Massey's method and network science} \label{sec:netsci}

We observe that Equation (\ref{Massey4}) is close to the equation defining Katz's centrality \cite{K53}:

\begin{equation} \label{Katz}
r = \alpha A r + \beta
\end{equation}
where $\alpha$ is a given constant and $\beta$ is a given (exogenous) vector. 
The Katz's equation can be solved by inverting matrix $I - \alpha A$ as soon as $\alpha$ does not coincide with the reciprocal of an eigenvalue of $A$ \cite{N10}.
In particular, making the assumption that all teams played the same number $k$ of matches, then Equation (\ref{Massey3}) becomes

\begin{equation} \label{Massey5}
r = \frac{1}{k} (A r + p),
\end{equation}
which corresponds to Katz's centrality with parameters $\alpha = 1/k$ and $\beta = p/k$. Nevertheless, notice that in this setting $k$ is an eigenvalue of $A$. It follows that Massey's equation is an instance of Katz's centrality, but cannot be computed by inverting matrix $I - \alpha A$.

Furthermore, in the following, we propose an electrical interpretation of the Massey's rating. If we view the symmetric matrix $A$ as the adjacency matrix of an undirected graph $G_A$, then $M$ is the Laplacian matrix of the graph $G_A$. The rating vector $r$ defined in system (\ref{Massey2}) is then equivalent to the potential vector over a resistor network defined by $A$ with supply vector $p$ \cite{GBS08}.

Following this metaphor, the resistor network has a resistor between nodes $i$ and $j$ as soon as teams $i$ and $j$ has matched with conductance (reciprocal of resistance) equal to the number $A_{i,j}$ of matches between $i$ and $j$. Sources, that are nodes $i$ with positive supply $p_i$ through which current enters the network, are teams with positive point spread, while targets,  that are nodes $i$ with negative supply $p_i$ through which current leaves the network, are teams with negative point spread. Notice that current entering and leaving the network must be equal, indeed the point spread vector $p$ sums to 0. The potential $r_i$ for node $i$ corresponds to the rating of team $i$: teams with large potential are teams high in the ranking. Moreover, the
current flow through edge $(i,j)$ is, by Ohm's law, the quantity $A_{i,j} (r_i - r_j)$, which corresponds to the rating difference between teams $i$ and $j$ (which is also an estimate of the point spread in a match between $i$ and $j$) multiplied by the number of times they matched: current flows more intensely between teams of different strengths (as measured by Massey's method) that matched many times.

Finally, it is interesting to analyse what happens to Massey's system at the end of the season, assuming that all $n$ teams matched all other teams once. In this case, the opponents rating component $$r^{(1)}_{i} = -\frac{r_i}{n-1},$$ where we have used the fact that $\sum_i r_i = 0$, and the point spread component $$r^{(2)}_{i} = \frac{p_i}{n-1},$$ hence $$r_i = r^{(1)}_{i} + r^{(2)}_{i} = -\frac{r_i}{n-1} + \frac{p_i}{n-1},$$ and thus $$r_i = \frac{p_i}{n}.$$
The same result is obtained noticing that $$M p = (D-A) p = (n-1) p - A p = np - (p + Ap) = np$$ where we have used the fact that $p$ sums to 0. Hence $p$ is an eigenvector of $M$ with eigenvalue $n$ and hence, once again, $r = p/n$ is a solution of the Massey's system. Hence, the final rating of a team is simply the mean point spread of the team.

It is possible to be a bit more precise about this property of Massey's method by exploiting the properties of the set of eigenvalues, or spectrum, of the Laplacian matrix $M = D - A$. The spectrum
reflects various aspects of the structure of the graph $G_A$ associated with $A$, in particular those related to connectedness.  It is well known that the Laplacian is singular and positive semidefinite (recall that $M=X^TX$ and $Xe=0$)
so that its eigenvalues are nonnegative and can be ordered as follows:
$$\lambda_1=0\le \lambda_2\le \lambda_3 \le \ldots\le \lambda_n.$$
It can be shown that $\lambda_n\le n$, see for example \cite{BH12}.
The multiplicity of $\lambda_1=0$ as an eigenvalue of the Laplacian can be shown to be equal
to the number of the connected components of the graph, see again  \cite{BH12}.
If the graph of the matches is connected or, equivalently,  $M$ is irreducible, as we assume in the following, $\lambda_2\neq 0$ is known as {\em algebraic connectivity} of the graph and is an indicator of the effort to be employed in order to disconnect the graph.

We can write the spectral decomposition of $M$ as $M=UDU^T$ where $U$ is orthogonal and its first column is equal to $e/\sqrt{n}$, and $D={\rm diag}(0$, $\lambda_2$, $\ldots$, $\lambda_n)$. From $Mr=p$ we obtain $r=UD^+U^Tp$ where
$D^+={\rm diag}(0$, $\frac{1}{\lambda_2}$, $\ldots$, $\frac{1}{\lambda_n})$.
Now
$$r-\frac{p}{n}=UD^+U^Tp-\frac{p}{n}=U\Bigr[D^+-\frac{I}{n}\Bigl]U^Tp,$$
where $I$ is the identity matrix.
Observe that the first component of the vector $U^Tp$ is equal to zero so that
$$r-\frac{p}{n}=U\Bigr[D^+-\frac{I}{n}\Bigl]U^Tp=U\Bigr[D^+-\frac{\tilde{I}}{n}\Bigl]U^Tp,$$
where $\tilde{I}={\rm diag}(0,1,\ldots,1)$. If we denote with $\|\cdot\|$ the Euclidean norm we obtain
$$\|r-\frac{p}{n}\|=\|U\bigr[D^+-\frac{\tilde{I}}{n}\bigl]U^Tp\|\le \|p\| \max_{k=2,\ldots,n} \Bigl| \frac{1}{\lambda_k}-\frac{1}{n}\Bigr|\le \|p\|\frac{n-\lambda_2}{n\lambda_2},$$
where we used the fact that the Euclidean norm of an orthogonal matrix is equal to one.
Hence, as the algebraic connectivity $\lambda_2$, as well as the other eigenvalues, approach $n$, that is, as more and more matches are played,
the vector $r$ approaches $p/n$ and the equality is reached when the graph of the matches becomes complete.

\subsection{The underlying linear algebra} \label{sec:algebra}

Let $n\ge 2$. An $n\times n$ matrix $A$ is defined to be strictly diagonally dominant if
$|A_{i,i}|> \sum_{j\neq i} |A_{i,j}|$ for $i=1,\ldots,n$. By using Ger\u{s}gorin theorem it is easy to prove that  if $A$ is strictly diagonally dominant then it is nonsingular, see \cite{HJ13}.
A matrix $A$ is defined to be diagonally dominant if
$|A_{i,i}|\geq \sum_{j\neq i} |A_{i,j}|$ for $i=1,\ldots,n$. A matrix $A$ is defined to be irreducibly diagonally dominant
if it is irreducible, it is diagonally dominant, and for at least one value of $i$  the strict inequality
$|A_{i,i}|> \sum_{j\neq i} |A_{i,j}|$ holds. An irreducibly diagonally dominant matrix is nonsingular, see again \cite{HJ13}.
With a non standard terminology, we define the matrix $A$ to be {\em exactly diagonally dominant} if $|A_{i,i}|=\sum_{j\neq i} |A_{i,j}|$ for $i=1,\ldots,n$. We introduce this concept since the Laplacian and the signless Laplacian of a graph are exactly diagonally dominant. The signless Laplacian of a graph is defined as the matrix whose entries are the absolute values of the entries of the Laplacian \cite{CRS07}. For example, the matrix $M = D-A$ of system (\ref{Massey2})
is the Laplacian of the graph  $G_A$, and the matrix $N = D + A$ of system (\ref{MasseyOD}) is its signless Laplacian.
The following lemma points out an interesting property of a symmetric irreducible and exactly diagonally dominant matrix. 

\begin{lemma}\label{newsubmatrices} Let $n\ge2$, $A$  be a symmetric irreducible exactly diagonally dominant
$n\times n$ matrix and $k\in\{1,2,\ldots,n\}$.
The $(n-1)\times(n-1)$  matrix $A(k)$ obtained by deleting from $A$ the $k$-th row and the $k$-th column is nonsingular.
\end{lemma}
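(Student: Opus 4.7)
The plan is to translate the question into graph-theoretic language. Associate to $A$ the weighted undirected graph $G_A$ whose edge weights are $|A_{i,j}|$ for $i\neq j$; irreducibility of $A$ corresponds to connectedness of $G_A$, and deleting row and column $k$ from $A$ corresponds to deleting vertex $k$ together with its incident edges. Concretely, each diagonal entry of $A(k)$ coincides with the corresponding diagonal entry of $A$, while for any row $i\neq k$ the exact diagonal dominance of $A$ yields
$$|A(k)_{i,i}| - \sum_{j\neq i,\, j\neq k}|A_{i,j}| = |A_{i,i}| - \sum_{j\neq i}|A_{i,j}| + |A_{i,k}| = |A_{i,k}|,$$
so $A(k)$ is automatically diagonally dominant, with strict inequality exactly on those rows indexed by the neighbors of $k$ in $G_A$.

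Next I would decompose $A(k)$ according to the connected components $C_1,\ldots,C_r$ of the graph $G_A\setminus\{k\}$. Up to a symmetric permutation, $A(k)$ becomes block diagonal with blocks $B_\ell$ indexed by the $C_\ell$, each $B_\ell$ symmetric and, when $|C_\ell|\ge 2$, irreducible. For any such larger block, connectedness of $G_A$ forces $C_\ell$ to contain at least one vertex adjacent to $k$ (otherwise the vertices of $C_\ell$ would be unreachable from $k$ in $G_A$), so by the calculation above that row of $B_\ell$ is strictly diagonally dominant. The block is therefore irreducibly diagonally dominant, hence nonsingular by the result recalled above. For a block corresponding to a singleton $\{i\}$ isolated in $G_A\setminus\{k\}$, the only neighbor of $i$ in $G_A$ must be $k$, so $|A_{i,i}|=|A_{i,k}|>0$ and the $1\times 1$ block is nonsingular as well. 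Since a block diagonal matrix is nonsingular if and only if each of its blocks is, this will establish the claim.

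The step I expect to be the main obstacle is precisely the possibility that $k$ is a cut-vertex of $G_A$: in that case $A(k)$ itself is reducible, so one cannot simply quote the ``irreducibly diagonally dominant implies nonsingular'' result on the whole matrix. The component-wise argument above is designed to bypass this, by localizing the slack $|A_{i,k}|$ arising from exact diagonal dominance into a genuine strict inequality inside each irreducible block, and by noting that every component of $G_A\setminus\{k\}$ must touch $k$ in the original graph.
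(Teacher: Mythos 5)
Your proposal is correct and follows essentially the same route as the paper: block-decompose $A(k)$ along the connected components of $G_A\setminus\{k\}$, observe that each component must touch $k$ by connectedness so that the slack $|A_{i,k}|$ from exact diagonal dominance makes each larger block irreducibly diagonally dominant (hence nonsingular), and handle singleton blocks via a nonzero diagonal entry. The only cosmetic difference is that the paper establishes up front that no diagonal entry of $A$ can vanish, whereas you derive this only where needed for the $1\times1$ blocks; both arguments are sound.
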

\begin{proof}
First of all notice that  $A$ cannot have diagonal elements equal to zero, since, being exactly diagonally dominant, this would imply that $A$ has a row completely equal to zero, and this conflicts with the assumption that $A$ is irreducible.

The matrix $A(k)$ is the adjacency matrix of the subgraph $G_{A(k)}$ obtained by the elimination of node $k$ and of the edges incident in that node from $G_A$. The matrix $A(k)$ can be transformed by a simultaneous permutation of rows and columns into a block diagonal matrix where each diagonal block is relative to one of the connected components of the subgraph. We want to show that each of these blocks is
nonsingular.

If the block has dimension $1$, then it contains one of the diagonal elements of $A$, hence its only entry is different from zero. If the dimension of the block is bigger than $1$, then the block is irreducible because its graph is connected by construction. Moreover, since $A$ is exactly diagonally dominant, each of the blocks of  $A(k)$ is diagonally dominant and the
strict inequality holds for one of the values of the index in each block because at least one of the deleted edges was incident in one of the nodes of the connected component. Hence the block is irreducibly diagonally dominant so that it is nonsingular.

Finally, the matrix $A(k)$ is nonsingular since the diagonal blocks are nonsingular.
\end{proof}

From Lemma \ref{newsubmatrices} we deduce that a symmetric irreducible and exactly diagonally dominant $n\times n$ matrix $A$ has rank at least $n-1$ and more specifically, if we delete
an arbitrary  row from $A$, the remaining rows are linearly independent. Hence, either $A$ is nonsingular, or its nullspace has dimension one. For example the Laplacian of a connected graph is always singular,
while its signless Laplacian is singular if and only if the graph is bipartite \cite{CRS07}.
If $A$ is singular and $v\neq 0$ is such that $Av=0$ then $v$ cannot have entries equal to zero since all the subsets of $n-1$ rows, or equivalently columns, of $A$ are linearly independent. For example for the Laplacian of a connected graph $v=e$ while for its signless Laplacian, in the case where the graph is bipartite,
$v$ has components equal to $1$ or $-1$ in correspondence with the two parts of the graph.

\begin{theorem}\label{theorem:trick}
Let $A$ be a symmetric irreducible and exactly diagonally dominant matrix and let $v\neq 0$ be such that $Av=0$. The following assertions hold true.
\begin{enumerate}
\item The linear system $Ax=f$ is solvable if and only if $f$ is a vector such that $v^Tf=0$.
\item By substituting the $k$-th row of $A$ with $v^T$ we obtain a nonsingular matrix.
\item Let $\hat{A}$ the matrix obtained from $A$ by substituting the $k$-th row of $A$ with $v^T$
and let $\hat{f}$ the vector obtained from $f$ by substituting the $k$-th entry of $f$ with zero.
The solution $x^*$ of the nonsigular system $\hat{A}x=\hat{f}$ is such  that $Ax^*=f$.
\end{enumerate}
\end{theorem}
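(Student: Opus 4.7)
The plan is to derive all three items from one structural observation: by Lemma~\ref{newsubmatrices}, $A$ has rank exactly $n-1$, so $\mathrm{null}(A)=\mathrm{span}(v)$, and by symmetry $\mathrm{range}(A)=\mathrm{null}(A)^\perp=v^\perp$. I would state this at the start and then read off the three items in order.

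Item~1 is essentially immediate from this observation. Forward, if $Ax=f$ then $v^Tf=v^TAx=(Av)^Tx=0$; conversely, if $v^Tf=0$ then $f\in v^\perp=\mathrm{range}(A)$, so some $x$ solves $Ax=f$. For item~2, I would assume $\hat{A}x=0$ and try to conclude $x=0$. Since the rows of $\hat{A}$ other than the $k$-th coincide with the corresponding rows of $A$, we get $(Ax)_i=0$ for all $i\neq k$, so $Ax=ce_k$ for some scalar $c$. Because $Ax\in\mathrm{range}(A)=v^\perp$, this gives $cv_k=0$; here I would invoke the fact established in the paragraph preceding the theorem, that $v$ has no zero entries, to deduce $c=0$. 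Then $Ax=0$, so $x=\alpha v$, and the $k$-th equation $v^Tx=0$ reduces to $\alpha\|v\|^2=0$, forcing $x=0$.

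For item~3, I would set $y=Ax^*-f$ and show $y=0$. For every $i\neq k$, the $i$-th row of $\hat{A}$ is the $i$-th row of $A$ and $\hat{f}_i=f_i$, so $y_i=(\hat{A}x^*)_i-\hat{f}_i=0$; hence $y=\beta e_k$. Multiplying on the left by $v^T$ gives $v^Ty=v^TAx^*-v^Tf=0-0=0$, so $\beta v_k=0$, and once more $v_k\neq 0$ forces $\beta=0$, i.e., $Ax^*=f$. The main leverage throughout is the combination of two prior facts: $A$ has rank exactly $n-1$ (governing range and kernel) and every entry of $v$ is nonzero (collapsing the single surviving $e_k$-component in items~2 and~3). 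Once these are in hand, each item is a short linear-algebra calculation, and I do not anticipate a serious obstacle beyond correctly isolating the $k$-th coordinate.
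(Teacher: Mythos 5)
Your proposal is correct, and it rests on exactly the same pillars as the paper's proof: Lemma~\ref{newsubmatrices} forces $\mathrm{rank}(A)=n-1$, symmetry identifies the range of $A$ with $v^\perp$, and the nonvanishing of every entry of $v$ is the fact that lets you kill the surviving $e_k$-component. Item~1 is argued identically. The only real divergence is in items~2 and~3. For item~2 the paper argues via row independence: deleting any row of $A$ leaves $n-1$ linearly independent rows (a consequence of Lemma~\ref{newsubmatrices} stated just before the theorem), and $v$, being orthogonal to the whole row space, cannot lie in their span, so the $n$ rows of $\hat{A}$ are independent; this route does not need $v_k\neq 0$. You instead compute $\ker(\hat{A})$ directly, which does use $v_k\neq 0$ but is equally valid and arguably more self-contained. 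For item~3 the paper explicitly writes $A_{k,:}$ and $f_k$ as combinations $-\sum_{j\neq k}(v_j/v_k)A_{j,:}$ and $-\sum_{j\neq k}(v_j/v_k)f_j$ and substitutes; your version with $y=Ax^*-f=\beta e_k$ and $v^Ty=0$ is the same computation repackaged, and it avoids dividing by $v_k$. One shared implicit assumption worth flagging: both you and the paper use $v^Tf=0$ in item~3, which is not stated in the theorem's hypotheses for that item but is the solvability condition from item~1; you should say explicitly that item~3 is asserted under that condition.
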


\begin{proof}

Item 1. Since there exists a vector $v\neq 0$ such that $Av=0$, then $A$ is singular and, as a consequence of Lemma \ref{newsubmatrices}, it has rank $n-1$. Hence, the nullspace of $A$ has dimension $1$ and is generated by the vector $v$. Since $A$ is symmetric, the vector $f$ belongs to the range of $A$ if and only if it is orthogonal to the nullspace of $A$, hence if and only if $v^Tf=0$.

Item 2. The vector $v$ is orthogonal to the rows of $A$, so that it is linearly independent from them. Hence if we substitute one of the rows of $A$ with $v^T$ we obtain a nonsingular matrix.

Item  3.
Observe that $x^*$ satisfies by construction all equations of system $Ax=f$ with the exception of the $k$-th one. Hence, if $A_{j,:}$ denotes the $j$-th row of $A$ we have $A_{j,:}x^*=f_j$ for $j=1,\ldots,k-1,k+1,\ldots,n$. But from $v^TA=0^T$ and $v^Tf=0$ and from the fact that all the entries of $v$ are different from zero we obtain
    $$A_{k,:}=-\sum_{j\neq k}\frac{v_j}{v_k}A_{j,:},\qquad \hbox{and}\qquad f_k=-\sum_{j\neq k}\frac{v_j}{v_k}f_j,$$
    so that
    $$A_{k,:}x^*=-\sum_{j\neq k}\frac{v_j}{v_k}A_{j,:}x^*=-\sum_{j\neq k}\frac{v_j}{v_k}f_j=f_k.$$
\end{proof}

Theorem \ref{theorem:trick} implies that, if $A$ is irreducible, then system (\ref{Massey2}) is solvable; moreover,  Theorem \ref{theorem:trick} justifies the use of system (\ref{Massey2.1}) in order to find a solution.

Let us consider now the system (\ref{MasseyOD}). For a connected graph the signless Laplacian
$N=D+A$ is singular if and and only if the graph is bipartite \cite{CRS07}. Hence for a bipartite graph we have to prove that system  (\ref{MasseyOD}) is solvable. In order to exploit Theorem \ref{theorem:trick} we have only to show that $Dr-f$ is orthogonal to the vector that generates the nullspace of the signless Laplacian.

\begin{lemma} \label{lemma:zero}
Let $U$ and $V$ be the two set of nodes of a bipartite graph $G_A$ with adjacency matrix $A$. Let $v$ be a vector such that $v_i = 1$ if $i \in U$ and $v_i = -1$ if $i \in V$. It holds that $v^T(D r - f) = 0$.
\end{lemma}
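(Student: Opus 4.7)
The plan is to reduce the claim $v^T(Dr - f) = 0$ to two simpler identities: one algebraic (involving the Massey equation and the nullvector of the signless Laplacian) and one combinatorial (coming from the bipartite structure of $G_A$ applied to the scoring vectors $f$ and $a$).

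First I would record the structural identity $v^T(D+A) = 0$, equivalently $v^T A = -v^T D$. This is already noted in the discussion preceding the lemma, since $v$ is the nullvector of the signless Laplacian of a bipartite graph. The quick check: for any node $j$, the nonzero entries $A_{ij}$ occur only for $i$ in the opposite part from $j$, so $(v^T A)_j = -v_j D_{j,j}$.

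Second, I would establish the purely combinatorial fact $v^T(f + a) = 0$. The idea is to account match by match. Every match is between some $i \in U$ and some $j \in V$, with scores $s_i$ and $s_j$. That match contributes $s_i + s_j$ to $(f+a)_i$ and $s_i + s_j$ to $(f+a)_j$, so its contribution to $v^T(f+a)$ is $(v_i + v_j)(s_i + s_j) = 0$ because $v_i + v_j = 1 + (-1) = 0$. Summing over all matches gives $v^T(f+a) = 0$, hence $v^T a = -v^T f$. This is the step I expect to be the main (really the only) subtlety, since it is where the bipartiteness of $G_A$ connects the combinatorics of the scores to the algebraic nullvector $v$.

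Finally I would combine the pieces using the Massey equation $(D-A)r = p = f - a$. Left-multiplying by $v^T$ and invoking $v^T A = -v^T D$ gives
\[
2\, v^T D r \;=\; v^T(D-A)r \;=\; v^T p \;=\; v^T f - v^T a \;=\; 2\, v^T f,
\]
where the last equality uses $v^T a = -v^T f$. Dividing by $2$ yields $v^T D r = v^T f$, i.e.\ $v^T(Dr - f) = 0$, as required. Once this lemma is in place, Theorem~\ref{theorem:trick} applied to the signless Laplacian $N = D+A$ (with nullvector $v$ and right-hand side $q = Dr - f$) immediately justifies both the solvability of (\ref{MasseyOD}) in the bipartite case and the perturbation trick used to form system (\ref{MasseyOD.1}).
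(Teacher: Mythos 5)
Your proof is correct and follows essentially the same route as the paper's: both combine the Massey equation $(D-A)r = p = f-a$ with two consequences of bipartiteness, namely that $v$ annihilates the signless Laplacian (your $v^TA=-v^TD$, the paper's $e_U^TD=e_V^TA$) and that points scored by one part are points conceded by the other (your $v^T(f+a)=0$, the paper's $e_U^Tf=e_V^Ta$). The only difference is cosmetic: the paper splits $v$ into the indicator vectors $e_U-e_V$ and shows $e_U^T(Dr-f)=e_V^T(Dr-f)$, whereas you keep $v$ whole and divide by $2$ at the end.
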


\begin{proof}
Let $e_U$ be the vector whose entry $i$ is $1$ if $i \in U$ and 0 if $i \in V$, and $e_V$ be the vector whose entry $i$ is $1$ if $i \in V$ and 0 if $i \in U$. It is sufficient to show that $$e_U^T (Dr -f) = e_V ^T(Dr - f).$$ Recall that $M = (D - A) r = p$, hence $Dr - p = Ar$, and $p = f-a$. Moreover, notice that $e_U^T f = e_V^T a$, since teams in $U$ matched only with teams in $V$.
We have that:

$$
\begin{array}{lcl}
e_V^T (Dr-f) & = & e_V^T D r + e_V^T a - e_V^T f \\
           & = & e_V^T D r - e_V^T p \\
           & = & e_V^T (D r - p) \\
           & = & e_V^T Ar
\end{array}
$$

Finally, notice that $e_U ^TD = e_V^T A$.
\end{proof}

Theorem \ref{theorem:trick} and Lemma \ref{lemma:zero} imply that, if $A$ is irreducible, then system (\ref{MasseyOD}) is solvable; moreover, Theorem \ref{theorem:trick} justifies the use of system (\ref{MasseyOD.1}) in order to find a solution in the case where the graph of $A$ is bipartite.

\section{Related literature} \label{sec:related}

In this section we review some popular alternatives of the original Massey method. 
Keener's method \cite{Ke93} is a recursive technique based on Perron-Frobenius theorem. Let $S_{i,j}$ be the number of points, or any other relevant statistics, that $i$ scores against $j$ and $A_{i,j}$ be the strength of $i$ compared to $j$. The strength $A_{i,j}$ can be interpreted as the probability that $i$ will defeat team $j$ in the future.  For instance, $A_{i,j} = S_{i,j} / (S_{i,j} + S_{j,i})$ or, using Laplace's rule of succession, 
$A_{i,j} = (S_{i,j} + 1) / (S_{i,j} + S_{j,i} + 2)$. Notice that in both cases $0 \leq A_{i,j} \leq 1$ and $A_{i,j} + A_{j,i} = 1$. The Keener's rating $r$ is the solution of the following recursive equation: $$A r = \lambda r,$$ where $\lambda$ is a constant and $A$ is the team strength matrix, or
$$r_i = \frac{1}{\lambda} \sum_j A_{i,j} r_j.$$ Hence, the thesis of Keener is that team $i$ is strong if it defeated strong teams. 

This intriguing recursive definition has been discovered and rediscovered many times in different contexts. It has been investigated, in chronological order, in econometrics, sociometry, bibliometrics, Web information retrieval, and network science \cite{F11-CACM}. In particular, it is the basis of the PageRank algorithm used by Google search engine to rank Web pages \cite{BP98}. Assuming that matrix $A$ is nonnegative and irreducible (or equivalently that the graph of $A$ is connected), Perron-Frobenius theorem guarantees that a rating solution exists. 

The offense-defense method is described in Langville and Meyer's book \cite{LM12}, where it is applied to the problem of rating and ranking of sport teams. A slightly modified method is proposed and investigated in a previous work by Knight \cite{Kn08}. The method lies on Sinkhorn-Knopp theorem. Let $A$ be a strength matrix as defined for Keener's method, that is, $A_{i,j}$ is the probability that $i$ will defeat team $j$ in the future. For any team, two mutually dependent scores
known as offensive $o$ and defensive $d$ ratings are defined as follows:

$$\begin{array}{lcl}
o & = & A d^{\div} \\
d & = & A^T o^{\div} .
\end{array}$$

where $d^{\div}$ is a vector whose components are the reciprocals of those of $d$ (and similarly for $o^{\div}$). Hence:

$$\begin{array}{lcl}
o_i & = &  \sum_k A_{i,k} / d_k \\
d_i & = &  \sum_k A_{k,i} / o_k.
\end{array}$$

It means that a team has high offensive strength (that is, high offensive rating) if it defeated teams with a high defensive strength (that is, low defensive rating), and a team has high defensive strength (that is, low defensive rating) if it was defeated by teams with a high offensive strength (that is, high offensive rating). Assuming that matrix $A$ has total support (or equivalently that the graph of $A$ is regularizable), Sinkhorn-Knopp theorem guarantees that a rating solution exists. The offense-defense method can be regarded as a non-linear (reciprocal) version of Kleinberg's method HITS \cite{K99}.

Finally, the Elo's system is an iterative method coined by the physics professor and excellent chess player Arpad Elo. Let $S_{i,j}$ be the score of team $i$ against team $j$; for instance, in chess a win is given a score of 1 and a draw a score of 1/2 (and a defeat a score of 0). Let $\mu_{i,j}$ be the number of points that team $i$ is expected to score against team $j$; this is typically computed as a logistic function of the difference of ratings between the players, for instance, $$\mu_{i,j} = \frac{1}{1 + 10^{-d_{i,j} / \zeta}},$$ where $d_{i,j} = r_i(old) - r_j(old)$ and $\zeta$ is a constant (in the chess world $\zeta = 400$). Then, when teams $i$ and $j$ match, the new rank $r_i(new)$ of team $i$ is updated as follows (and similarly for $j$): $$r_i(new) = r_i(old) + \kappa (S_{i,j} - \mu_{i,j}),$$ where $\kappa$ is a constant (for instance, in chess $\kappa = 25$ for new players). 
Hence, beating a stronger player has a larger reward than beating a weaker one. According to the movie \textit{The social network} by David Fincher, it appears that the Elo's method formed the basis for rating people on Zuckerberg's Web site Facemash, which was the predecessor of Facebook. 

\section{Discussion} \label{sec:discussion}

We believe that Massey's method is valid as a centrality measure even outside the sport world. Immagine a weighted directed network in which nodes have an advantage in having weighty edges exiting from the node, while they have a disadvantage in having weighty edges entering to the node. For instance, this holds when nodes are countries and edges corresponds to valued transfer of goods between countries: an edge weighted $k$ from A to B means that country A exports to country B for a value of $k$ (or B imports from A for a value o $k$). As another example, imagine if nodes are financial actors (like banks) and edges are money that have been loaned between banks: an edge weighted $k$ from A to B means that bank A loaned to bank B a value of $k$ (or B borrowed from A a value o $k$).

We can view such a weighted directed network as a competition between nodes of the network as follows:

\begin{itemize}
\item for any pair of nodes $i$ and $j$ for which there is an edge $(i,j)$ with weight $k_1$ as well as the reciprocal edge $(j,i)$ with weight $k_2$, we have a match between $i$ and $j$ in which $i$ scores $k_1$ points and $j$ scores $k_2$ points;

\item for any pair of nodes $i$ and $j$ for which there is an edge $(i,j)$ with weight $k_1$ but there is no reciprocal edge $(j,i)$, we have a match between $i$ and $j$ in which $i$ scores $k_1$ points and $j$ scores $0$ points;

\item for any pair of nodes $i$ and $j$ that are not connected by an edge (in any direction), we have no match between $i$ and $j$.
\end{itemize}

Having defined the matches of the competition, we can apply Massey's method (or any alternative sport rating method) to estimate the importance of nodes in the network. For instance, in the case of Massey's method and the banking scenario illustrated above, we are looking for ratings of banks such that $r_i - r_j = y_k$, where $y_k$ is the financial balance (credit minus debit) between banks $i$ and $j$. Hence if $j$ is highly indebted with $i$, then the difference in rankings between $i$ and $j$ is large (in favor of $i$).

\bibliographystyle{plain}

\end{document}